\DeclareMathOperator*{\argmax}{arg\,max} % argmax
\title{Improved Approximations for Extremal Eigenvalues of Sparse Hamiltonians} %TODO Please add
\titlerunning{Approximating Sparse Hamiltonians} %TODO optional, please use if title is longer than one line
\author{Daniel Hothem}{Quantum Algorithms and Applications Collaboratory, Sandial National Laboratories, Livermore, CA 94550, USA}{dhothem@sandia.gov}{https://orcid.org/0000-0002-5628-9945}{}
\author{Ojas Parekh}{Quantum Algorithms and Applications Collaboratory, Sandial National Laboratories, Albuquerque, NM 87123, USA}{odparek@sandia.gov}{https://orcid.org/0000-0003-2689-9264}{}
\author{Kevin Thompson}{Quantum Algorithms and Applications Collaboratory, Sandial National Laboratories, Albuquerque, NM 87123, USA}{kevthom@sandia.gov}{}{}
\authorrunning{D. Hothem, O. Parekh, and K. Thompson} %TODO mandatory. First: Use abbreviated first/middle names. Second (only in severe cases): Use first author plus 'et al.'
\keywords{Approximation algorithms, Extremal eigenvalues, Sparse Hamiltonians, Fermionic Hamiltonians, Qubit Hamiltonians} %TODO mandatory; please add comma-separated list of keywords
\begin{document}

\hideLIPIcs

\maketitle

%TODO mandatory: add short abstract of the document
\begin{abstract}
    We give a classical $1/(qk+1)$-approximation for the maximum eigenvalue of a $k$-sparse fermionic Hamiltonian with strictly $q$-local terms, as well as a $1/(4k+1)$-approximation when the Hamiltonian has both $2$-local and $4$-local terms.  More generally we obtain a $1/O(qk^2)$-approximation for $k$-sparse fermionic Hamiltonians with terms of locality at most $q$. Our techniques also yield  analogous approximations for $k$-sparse, $q$-local qubit Hamiltonians with small hidden constants and improved dependence on $q$.
\end{abstract}

\section{Introduction}
Finding the ground state energy of systems of particles is a fundamental problem of quantum mechanics. Finding the ground state energies of local Hamiltonians is believed to be difficult for both classical and quantum computers \cite{K06, K02}. Instead, it is often easier to find classical and quantum approximations to these ground state energies. In this paper, we consider approximations to the extremal eigenvalues of a local, $k$-sparse fermionic Hamiltonian: 
\begin{equation*}
    H=\sum_{\Gamma} H_\Gamma c^{\Gamma}.
\end{equation*}
Here $H$ is a fermionic Hamiltonian with real coefficients $H_{\Gamma}$, where ignoring phase factors, each term $c^{\Gamma}$ is a product of $q$ Majorana operators (i.e.,\ $H$ is $q$-local with $q$ even) and each Majorana operator appears in at most $k$ non-zero terms (i.e.,\ $H$ is $k$-sparse).  We let $m = \sum_{\Gamma} |H_\Gamma|$.

Our main technical contribution is a carefully designed graph $G$, whose vertices correspond to the terms in $H$. We are able to construct states that achieve better approximations than in previous works by finding a suitably large independent set in $G$. This work is similar to (but distinct from) recent work by Herasymenko, Stroeks, Helsen, and Terhal \cite{H22} in which a similar graph is used to find \emph{diffuse sets} of Majorana monomials from which they construct a state. Herasymenko et al.\ also work with a graph whose vertices correspond to the terms in $H$; however, their edge set is different. Our new edge set also allows us to generalize our results beyond the $q = 4, 2$ case handled by Herasymenko et al., and to prove better approximation ratios.  

\begin{table}[t]
    \centering
    \caption{Main results contrasted with the previous state of the art.}
    \begin{tabular}{|c|c|c|c|}
    \hline
    \multicolumn{2}{|c|}{Hamiltonian} & Our result & Previous result \\
    \hline
    \multirow{3}{*}{fermionic} & $k$-sparse, strictly $q$-local fermionic & $1/(qk+1)$ & $1/\mathcal{O}(q^2k^2)$\cite{H22} \\
    & $k$-sparse, $4,2$-local & $1/(4k+1)$ & $1/\mathcal{O}(k^2)$\cite{H22}\\
    & $k$-sparse, $q$-local & $1/\mathcal{O}(qk^2)$ & N/A \\
    \hline
    \multirow{3}{*}{qubit} & $k$-sparse, strictly $q$-local & $1/(qk+1)$ & $3^{-q/2}/(4qk)$\cite{HM17} \\
    & $k$-sparse, $2$-local & $1/(2k+1)$ & 1/(24k)\cite{HM17}\\
    & $k$-sparse, $q$-local & $1/\mathcal{O}(qk^2)$ & $3^{-q/2}/(4qk)$\cite{HM17} \\
    \hline
    \end{tabular}
    \label{tab:results}
\end{table}

\Cref{tab:results} summarizes our main results. We also list the previously known best results. The table is split into two sections: (1) fermionic Hamiltonians and (2) qubit Hamiltonians. Although our work does not focus on qubit Hamiltonians, our proof ideas furnish results that improve upon the previously known best results (see Sections~\ref{sec:contributions} and \ref{sec:extensions}).

\section{Contextualizing our results}\label{sec:contributions}
Bravyi, Gosset, Koenig, and Temme~\cite{Bra18} were the first to suggest approximation algorithms for the largest eigenvalue of fermionic Hamiltonians using fermionic Gaussian states, achieving a $1/\mathcal{O}(n\log(n))$-approximation ratio for generic $4$-local fermionic Hamiltonians.  They also asked whether Gaussian states might provide a constant-factor approximation. Among other results, Hastings and O'Donnell~\cite{HO22} subsequently demonstrated that Gaussian states offer at best a $1/\Omega(\sqrt{n})$-approximation for a class of $4$-local fermionic Hamiltonians, known as the Sachdev-Ye-Kitaev (SYK) model.  Hamiltonians in the SYK model are dense $4$-local Hamiltonians, hence the work of Hastings and O'Donnell left open the possibility of a constant-factor Gaussian approximation algorithm for models with sparse Hamiltonians. Sparse Hamiltonians are a natural class of Hamiltonians to study. Examples, such as the Fermi-Hubbard model, are ubiquitous~\cite{Hub63}. 

Recent work by Herasymenko, Stroeks, Helsen, and Terhal~\cite{H22} proves the existence of such constant-factor approximations. They show that $\lambda_{\max}(H) \geq m/Q$, where $\lambda_{\max}(H)$ is the largest eigenvalue of $H$ and $Q=q(q-1)(k-1)^2+q(k-1)+2$ for a $k$-sparse strictly $q$-local fermionic Hamiltonian. Herasymenko et al.\ also prove an improved ratio of $Q = 12(k+1)^2+4(k-1)+2$ when specializing to $k$-sparse, $4,2$-local fermionic Hamiltonians. Our work directly improves upon these results (see \Cref{tab:results}). Our work also removes the conditions on system size present in Herasymenko et al. This leads to immediate improvements in Herasymenko et al.'s work on the sparse SYK model. All of the above results are obtained by efficient classical algorithms producing descriptions of Gaussian states. We refer the reader to \cite{H22} for further background, motivation, and applications to the SYK model. Finally, Herasymenko et al.'s result do not extend to $k$-sparse, $q$-local fermionic Hamiltonians (i.e., where all terms have locality \emph{at most} $q$). To our knowledge, our $1/\mathcal{O}(qk^2)$ approximation is the first of this kind.  It remains an open question whether this may be improved to $1/\mathcal{O}(qk)$. 

Results of the above flavor were obtained for traceless $k$-sparse qubit Hamiltonians with constant locality by Harrow and Montanaro~\cite{HM17}, who show that $\lambda_{\max}(H) \geq \Omega(m/k)$ using product states, where $k$-sparse and $m$ are defined analogously as above; bounds upon which our ideas give a constant-factor improvement (see \Cref{tab:results}). They also give an improved bound with respect to the operator norm instead of the maximum eigenvalue: $\|H\| \geq \Omega(m/\sqrt{k})$.  In the fermionic case, we give a 2-local example with $\lambda_{\max}(H) = \|H\| = \Theta(m/k)$, showing that such an improvement is not possible (see \Cref{sec:optimality}) and that our result for the strictly $q$-local case is tight.  As noted in \Cref{tab:results}, our techniques also apply to the Hamiltonians considered by Harrow and Montanaro, yielding approximation guarantees with small hidden constants and improved dependence on $q$.

\section{Preliminaries}\label{sec:preliminaries}

In this section, we provide the necessary preliminaries for the rest of the work. We begin with an overview of fermionic Hamiltonians before providing the necessary background on Gaussian states. This section draws upon \cite{H22, HO22, Bra05}. 

\subsection{Fermionic Hamiltonians}\label{ssec:fermionic_hamiltonians}

Fermionic Hamiltonians describe systems of fermionic particles, such as electrons. For our purposes, it is easiest to express a fermionic Hamiltonian in terms of Majorana operators. Throughout, we use the notation $[n]:=\lbrace 1,\dots, n\rbrace$. We also use the notation $\mathcal{E} = \{\Gamma \subseteq [n] \mid H_\Gamma \neq 0\}$ to denote the set of non-zero terms in a Hamiltonian.

\begin{definition}\label{def:majorana_operators}
    Given $n$ fermionic modes, a set of $2n$ traceless and Hermitian operators $\lbrace c_i\rbrace_{i=1}^{2n}$ are Majorana operators if they satisfy $c_ic_j + c_jc_i = 2\delta_{ij}$ for all $i, j \in [2n]$.
\end{definition}

\begin{definition}\label{def:fermionic_hamiltonian}
    Let $\lbrace c_i\rbrace$ be a collection of $2n$ Majorana operators endowed with an ordering (say the lexicographic ordering). A fermionic Hamiltonian has the form:
    \begin{equation}\label{eq:Hamiltonian}
    H=\sum_{\Gamma\subseteq [2n]} H_\Gamma c^{\Gamma},
\end{equation}
where $\Gamma \subseteq [2n]$ has even order, $c^{\Gamma}$ is the product of Majorana operators appearing in $\Gamma$ (ordered lexicographically), and the $H_{\Gamma} \in \mathbb{R}$. Note that $c^{\Gamma}$ may contain an additional pre-factor of $i$ in order to satisfy hermiticity (e.g., when $\vert\Gamma\vert = 2$).
\end{definition}

\begin{definition}\label{def:locality}
    If $H$ is a fermionic Hamiltonian defined in terms of $2n$ Majorana operators, then $H$ is $q$-local if there exists $q \in \mathbb{N}$ such that each non-zero term in $H$ has locality at most $q$, that is, for all $\Gamma \subseteq [2n]$ with $H_{\Gamma}\neq 0$, $\vert \Gamma\vert\leq q$. $H$ is strictly $q$-local if $\vert\Gamma\vert = q$ for all non-zero summands.
\end{definition}

\begin{definition}\label{def:sparse}
    Let $H$ be a fermionic Hamiltonian on $2n$ Majorana operators. Then $H$ is $k$-sparse if each Majorana operator $c_i$ appears in at most $k$ non-zero terms, that is, for all $i \in [2n]$, $|\{ \Gamma \in \mathcal{E} \mid i \in \Gamma \}|\leq k$.
\end{definition}

\subsection{Gaussian states}\label{ssec:gaussian_states}
First note that for any real, orthogonal matrix $R \in O(2n)$, the transformation
\begin{equation}\label{eqn:rotation}
    \tilde{c}_i = \sum_{j=1}^{2n}{R_{ij}c_j},
\end{equation}
gives rise to a new set of $[2n]$ Majorana operators $\lbrace \tilde{c}_i\rbrace$.

\begin{definition}\label{def:gaussian_state}
Let $\lbrace c_{i}\rbrace$ be a set of $[2n]$ Majorana operators, $R \in O(2n)$, and $\lbrace \tilde{c}_i\rbrace$ defined as in \ref{eqn:rotation}. For any assignment $\lambda_1, \dots, \lambda_{n} \in [-1,1]$, the following state is a (mixed) fermionic Gaussian state:
\begin{equation}
    \rho = \frac{1}{2^n}\prod_{j=1}^{n}{\Big(\mathbb{I} + i\lambda_j\tilde{c}_{2j-1}\tilde{c}_{2j}\Big)}
\end{equation}
The state $\rho$ is pure when $\lambda_j \in \lbrace \pm 1\rbrace$ for all $j \in [n]$.
\end{definition}

Fermionic Gaussian states exhibit several nice properties. Not only are they the ground states of homogeneous 2-local fermionic Hamiltonians \cite{Bra05}, but their higher-order correlates are efficiently computable from their \emph{correlation matrix}. If $\rho$ is defined as in \Cref{def:gaussian_state}, then the correlation matrix $M$ of $\rho$ is the real, antisymmetric $2n \times 2n$ matrix with entries defined as:
\begin{equation}\label{def:correlation_matrix}
        M_{ij} = \frac{i}{2}Tr(\rho[c_i, c_j]).
    \end{equation}
The higher-order correlates of $\rho$ can be computed via Wick's formalism:
\begin{equation}
    Tr(c^{\Gamma}\rho) = Pf(M_{\Gamma}),
\end{equation}
where $M_{\Gamma}$ is the $\vert\Gamma\vert\times\vert\Gamma\vert$ submatrix of $M$ containing only the ordered rows and columns in $\Gamma$ and $Pf(\cdot)$ is the matrix Pfaffian. Finally, for any Gaussian state $\rho$, the set $\lbrace \lambda_j\rbrace$ and $M$ are connected by the following lemma:

\begin{lemma}[Bra05]\label{lem:gaussian_rotation}
    For any Gaussian state $\rho$ with correlation matrix $M$, there exists some $R \in O(2n)$ such that the adjoint action of $O(2n)$ on $M$ block-diagonalizes $M$ into the following form:
    \begin{equation}
        M = R\bigoplus_{j=1}^{n}{\begin{pmatrix}
                                0 & \lambda_{j} \\
                                -\lambda_j & 0  \end{pmatrix}}R^T,
    \end{equation}
    where the $\lambda_j$ are the same as in \Cref{def:gaussian_state}. Thus, every real, anti-symmetric matrix $M$ is the correlation matrix for some Gaussian state $\rho$.
\end{lemma}

\section{Main approximation algorithm}\label{sec:main_result}

In this section, we demonstrate our main technical ideas by proving an approximation ratio for $k$-sparse Hamiltonians with both $4$-local and $2$-local terms. We chose this specific case as it highlights all of our technical ideas, while also being the most physically interesting case. In \Cref{sec:extensions} we show how these ideas generalize to the other cases described in \Cref{tab:results}. 

\begin{theorem}\label{thm:main}
There is a classical polynomial time algorithm that, given as input the weights $\{H_\Gamma\}$ of some $k$-sparse and $4,2$-local fermionic Hamiltonian $H$, returns a description of a quantum state $\rho$ achieving energy 
\begin{equation*}
Tr(H \rho) \geq \frac{1}{4k+1} \sum_{\Gamma} |H_\Gamma| \geq \frac{1}{4 k +1}\lambda_{max}(H).
\end{equation*}
\end{theorem}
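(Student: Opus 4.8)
The plan is to mimic the Harrow--Montanaro / Herasymenko-et-al.\ strategy: build a conflict graph on the nonzero terms, find a large induced matching or independent-set-like structure, and use it to define a Gaussian state whose energy collects a constant fraction of $m = \sum_\Gamma |H_\Gamma|$. Concretely, I would define a graph $G$ whose vertices are the elements of $\mathcal{E}$, and put an edge between $\Gamma$ and $\Gamma'$ when the corresponding terms ``interfere'' — either because $\Gamma \cap \Gamma' \neq \emptyset$ (they share a Majorana, so the operators do not commute nicely / cannot be simultaneously diagonalized by a single Gaussian rotation), or because $\Gamma$ and $\Gamma'$ are disjoint but $\Gamma \cup \Gamma'$ sits inside some third nonzero $\Gamma''$ (the product $c^\Gamma c^{\Gamma'}$ is then itself a term, creating a cross-term obstruction). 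The $k$-sparsity bound should give a bound on the maximum degree of $G$: each of the (at most $4$) Majoranas in $\Gamma$ lies in at most $k-1$ other terms by type~1, and type~2 contributes only a bounded number of extra neighbors, so one argues $\Delta(G) \le 4k$ in the mixed $2/4$-local case (and $\le qk$ in general). Then a greedy / degeneracy argument gives an independent set $S \subseteq \mathcal{E}$ with $\sum_{\Gamma \in S} |H_\Gamma| \ge \frac{1}{4k+1}\, m$.

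Next I would show that the terms indexed by an independent set $S$ can be simultaneously ``turned on'' by a single Gaussian state. The key point is that for $\Gamma, \Gamma' \in S$ the operators $c^\Gamma$ and $c^{\Gamma'}$ either commute, or their interaction is harmless; one wants a fermionic Gaussian (or product-of-Majorana-pairs) state $\rho$ and, for each $\Gamma \in S$, a sign $\sigma_\Gamma \in \{\pm 1\}$ such that $\tr(\sigma_\Gamma c^\Gamma \rho) = 1$ while $\tr(c^{\Gamma'} \rho)$ is unaffected for the other terms in $S$. Absorbing $\sigma_\Gamma$ into the choice of signs of the Majorana pairing, one gets $\tr(H\rho) = \sum_{\Gamma \in S} |H_\Gamma| \cdot \tr(\sigma_\Gamma c^\Gamma \rho) + (\text{terms outside } S)$; the terms outside $S$ must be shown to contribute zero in expectation, which is exactly what the type~2 edges in $G$ were designed to guarantee — any nonzero term $\Gamma''$ not in $S$ either overlaps some $\Gamma \in S$ in an odd-ish way that zeros its expectation, or is an independent term whose expectation can be arranged to vanish. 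Constructing $\rho$ explicitly: pair up the Majoranas appearing in $\bigcup_{\Gamma \in S}\Gamma$ consistently (this is where disjointness within $S$ plus the type-2 condition lets one choose a global pairing), take the corresponding stabilizer-type Gaussian state, and verify the Wick/Pfaffian formula gives the claimed values.

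I expect the main obstacle to be verifying that a single Gaussian state can simultaneously saturate \emph{all} terms in the independent set while killing the contribution of every other term — the qubit arguments of \cite{HM17} use product states where this is transparent, but in the fermionic setting the nonlocal Jordan--Wigner-type tails mean one must argue carefully about how disjoint Majorana sets interact under the CAR algebra, and in particular show that the chosen pairing does not create uncontrolled cross-terms among the $4$-local terms. Handling the mixed locality ($2$-local terms pairing two Majoranas, $4$-local terms needing two pairs) and getting the degree bound to come out to exactly $4k$ rather than something larger is the other delicate point; the type-2 adjacency condition has to be calibrated precisely so that a degree-$4k$ bound holds and so that independence in $G$ really does imply the cross-terms vanish. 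Once the combinatorial bound $\Delta(G) \le 4k$ and the ``simultaneous saturation'' lemma are in hand, the theorem follows immediately by the greedy independent set bound $|S|$-weighted $\ge m/(4k+1)$ together with $\lambda_{\max}(H) \le m$. Finally, everything is efficient: building $G$, the greedy degeneracy ordering, and writing down the Gaussian covariance matrix of $\rho$ are all polynomial time.
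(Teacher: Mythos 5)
Your proposal follows essentially the same route as the paper: the same conflict graph with overlap edges and ``union is a term'' edges, the same $\Delta(G)\le 4k$ degree bound, an independent set of weight at least $m/(4k+1)$ via greedy coloring (the paper uses Brooks' theorem, which is interchangeable here), and a product/Gaussian state supported on that set whose cross-terms vanish precisely because of the type-2 edges. The one detail you defer --- that a 2-local $\Gamma$ has degree at most $2a+b\le 4k$ where $a,b$ count overlapping 4-local and 2-local terms, by charging each type-2 neighbor to a distinct overlapping 4-local term --- is exactly the accounting the paper carries out, so your outline is correct.
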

\begin{proof}
Define a graph $G=(V, E)$ with vertices corresponding to the nonzero terms in the Hamiltonian (i.e.,\ $V = \mathcal{E}$).  The graph $G$ may contain vertices corresponding to $2$-local or $4$-local terms. We include an edge $(v_{\Gamma}, v_{\Gamma'}) \in E$ if and only if one of the following conditions is met: 
\begin{enumerate}
    \item[(i)]\label{cond:one} $c^\Gamma$ and $c^{\Gamma'}$ share one or more Majorana operators (i.e.,\ $\Gamma \cap \Gamma' \neq \emptyset$), or
    \item[(ii)]\label{item:2} $\Gamma$ and $\Gamma'$ are disjoint and $\Gamma \cup \Gamma' \in \mathcal{E}$. 
\end{enumerate}
If there are $m$ nonzero terms in the Hamiltonian then the graph $G$ has $m$ vertices, and the degree of a vertex in the graph is at most $4k$.  We can see the latter as follows.  Fix some vertex $v_\Gamma$. By construction, 
\begin{equation}\label{eq:deg}
    \text{deg}(v_{\Gamma}) = \vert \lbrace (\Gamma, \Gamma') \in \mathcal{E}\times \mathcal{E} \mid \text{$\Gamma$ and $\Gamma'$ satisfy (i) or (ii)}\rbrace \vert.
\end{equation}

We consider two cases:
\begin{itemize}
    \item[-] $\Gamma$ is 4-local. Consider an edge $(v_{\Gamma}, v_{\Gamma'})$. As $H$ contains no 6-local or 8-local terms, $\Gamma \cap \Gamma' \neq \emptyset$. As $H$ is $k$ sparse, there are at most $4k$ $\Gamma'$ for which this can occur.
    
    \item[-] $\Gamma$ is 2-local. Let $a$ equal the number of $4$-local Hamiltonian terms overlapping with $\Gamma$, and let $b$ equal the number of $2$-local terms overlapping with $\Gamma$.  We claim that the degree of $v_\Gamma $ is at most $2a+b$.

    There are $b$ 2-local $\Gamma'$ satisfying (i) with $\Gamma$.  Each $2$-local $\Gamma'$ satisfying (ii) results in a unique 4-local $\Gamma \cup \Gamma' \in \mathcal{E}$ overlapping with $\Gamma$, hence there at most $a$ such $\Gamma'$.  Finally, no $4$-local $\Gamma'$ may satisfy (ii), and there are $a$ 4-local $\Gamma'$ satisfying (i).    
    
    Since $\Gamma$ overlaps with at most $2k$ $\Gamma'$, we have $a+b \leq 2k$ so that $2a+b \leq 4k$.
\end{itemize}

By Brooks' Theorem we can in polynomial time find a coloring of the vertices of $G$ with at most $4k+1$ colors.  This means we can partition the vertices into at most $4k+1$ independent sets, $\{S_1, ..., S_t\}$, with one of these sets having at least a $1/(4k+1)$ fraction of the sum of the absolute values of the weights:
\begin{equation}\label{eqn1}
\sum_{\Gamma} |H_\Gamma|=\sum_{S_i} \sum_{\Gamma\in S_i} |H_{\Gamma}| \leq (4k+1) \max_i \sum_{\Gamma \in S_i} |H_{\Gamma}|.
\end{equation}
It follows from \Cref{eqn1} that
\begin{equation*}
    \max_i \sum_{\Gamma \in S_i} |H_{\Gamma}| \geq \frac{1}{(4k+1)}\sum_{\Gamma}|H_\Gamma|.
\end{equation*}

Define $S_j = \argmax_j \sum_{\Gamma \in S_j}{\vert H_{\Gamma}\vert}$, and consider the following state:
\begin{equation}\label{eq:state}
\rho=\frac{1}{2^n} \prod_{\Gamma \in S_j} (\mathbb{I}+\text{sign}(H_\Gamma)c^\Gamma).
\end{equation}

We claim that $\rho$ is a valid quantum state and obtains objective $\sum_{\Gamma \in S_j} |H_\Gamma|$.  By definition, $\rho$ is proportional to a projector on a stabilizer state with stabilizer generators given by $c^\Gamma$ for $\Gamma \in S_j$: Observe that $[c^\Gamma, c^{\Gamma'}]=0$ for all $\Gamma, \Gamma'\in S_j$ since $S_j$ is an independent set.  Hence, $\rho$ is the product of commuting projectors and must be positive semidefinite. 

To see that $\rho$ obtains the desired objective, we first expand the product in \Cref{eq:state} as a sum and consider products of two or more terms, $\sigma = \prod_p c^{\Gamma_p}$ for $\Gamma_p \in S_j$. If any of the $\Gamma_p$ are 4-local or $p \geq 3$, $\sigma$ cannot be proportional to a term of $H$ since the $\Gamma \in S_j$ are disjoint, and no cancellation in products of Majorona operators can occur.  The remaining case is a product of two 2-local operators.  For any such $\Gamma,\Gamma' \in S_j$, by (ii) and because $S_j$ is an independent set, the product $c^\Gamma c^{\Gamma'}$ cannot be proportional to $c^{\Gamma''}$ for any $\Gamma'' \in \mathcal{E}$.

Hence we have
\begin{align*}
    Tr(\mathbb{I} \rho) &= 1,\\
    Tr(c^\Gamma \rho) &= \text{sign}(H_\Gamma) \quad \forall\Gamma \in S_j,\text{ and }\\
    Tr(c^\Gamma \rho) &= 0\quad \forall\Gamma \in \mathcal{E}\setminus S_j.
\end{align*}
This yields the desired claim that $\rho$ is a normalized state for which
\begin{equation*}
Tr(H \rho)=\sum_\Gamma H_\Gamma Tr(c^\Gamma \rho)=\sum_{\Gamma \in S_j} H_\Gamma Tr(c^\Gamma \rho) =\sum_{\Gamma \in S_j} |H_\Gamma| \geq \frac{1}{4k+1} \sum_\Gamma |H_\Gamma|.
\end{equation*}
\end{proof}

\section{Conversion to a Gaussian state}
The $\rho$ constructed in \Cref{thm:main} is, in fact, a mixture of Gaussian states. This is proven in the following lemma. This implies the existence of a Gaussian state with at least the same objective as $\rho$.

\begin{lemma}\label{lem:gaussianmixture}
The state $\rho$ defined in \Cref{eq:state} is a mixture of Gaussian states.
\end{lemma}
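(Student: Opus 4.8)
The plan is to exhibit $\rho$ as a convex combination of fermionic Gaussian states, the key move being to split each $4$-local factor $\mathbb{I}+\text{sign}(H_\Gamma)c^\Gamma$ into a sum of two products of \emph{commuting quadratic} factors; expanding the product in \Cref{eq:state} then turns every term into something manifestly Gaussian.

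\textbf{Building blocks.} First I would record that whenever $\mathcal{P}$ is a family of pairwise-disjoint pairs $P=\{a_P,b_P\}\subseteq[2n]$ and $\epsilon_P\in\{-1,+1\}$ are arbitrary signs, the operator $\sigma_{\mathcal{P}}:=\frac{1}{2^n}\prod_{P\in\mathcal{P}}(\mathbb{I}+\epsilon_P\, i c_{a_P}c_{b_P})$ is a Gaussian state. The operators $ic_{a_P}c_{b_P}$ are Hermitian, square to $\mathbb{I}$, and mutually commute because their supports are disjoint, so $\sigma_{\mathcal{P}}$ is positive semidefinite; since each $ic_{a_P}c_{b_P}$ is traceless, $\sigma_{\mathcal{P}}$ is also normalized. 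It is Gaussian because
\[
\sigma_{\mathcal{P}}=\lim_{\theta\to\infty}\frac{1}{2^n}\prod_{P\in\mathcal{P}}\bigl(\mathbb{I}+\tanh(\theta)\,\epsilon_P\, ic_{a_P}c_{b_P}\bigr),
\]
and for finite $\theta$ the right-hand side equals the thermal state $\exp(\theta\sum_{P}\epsilon_P\, ic_{a_P}c_{b_P})/\tr\exp(\theta\sum_{P}\epsilon_P\, ic_{a_P}c_{b_P})$, whose exponent is quadratic in the Majorana operators; equivalently one checks directly that the covariance matrix of $\sigma_{\mathcal{P}}$ is supported exactly on the pairs of $\mathcal{P}$ and that all its higher moments obey Wick's theorem. (If one wishes to read ``Gaussian'' as ``pure Gaussian'', note that extending $\mathcal{P}$ to a perfect pairing of $[2n]$ and averaging uniformly over the sign choices on the added pairs writes $\sigma_{\mathcal{P}}$ as a uniform mixture of pure Gaussian states.)

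\textbf{Splitting the $4$-local factors.} For a $4$-local $\Gamma=\{j_1<j_2<j_3<j_4\}\in S_j$, set $P:=ic_{j_1}c_{j_2}$ and $Q:=ic_{j_3}c_{j_4}$. These are Hermitian, satisfy $P^2=Q^2=\mathbb{I}$ and $[P,Q]=0$ (disjoint supports), and obey $c^\Gamma=-PQ$. Writing $s:=\text{sign}(H_\Gamma)\in\{-1,+1\}$, a one-line expansion gives
\[
\mathbb{I}+s\,c^\Gamma=\mathbb{I}-sPQ=\tfrac12(\mathbb{I}+P)(\mathbb{I}-sQ)+\tfrac12(\mathbb{I}-P)(\mathbb{I}+sQ),
\]
and each summand is $\tfrac12$ times a product of two single-pair factors of the form $(\mathbb{I}+\epsilon\, ic_ac_b)$ with $\epsilon\in\{-1,+1\}$ and $\{a,b\}$ either $\{j_1,j_2\}$ or $\{j_3,j_4\}$. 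A $2$-local factor $\mathbb{I}+\text{sign}(H_\Gamma)c^\Gamma=\mathbb{I}+\text{sign}(H_\Gamma)\, ic_{j_1}c_{j_2}$ is already of single-pair form.

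\textbf{Assembling the mixture.} Substituting these identities into \Cref{eq:state} and expanding the product over the $t$ $4$-local elements of $S_j$ yields $\rho=\sum_{\mathbf{b}\in\{0,1\}^{t}}2^{-t}\,\sigma_{\mathbf{b}}$, a convex combination in which each $\sigma_{\mathbf{b}}$ has precisely the form $\frac{1}{2^n}\prod_{P\in\mathcal{P}_{\mathbf{b}}}(\mathbb{I}+\epsilon_P\, ic_{a_P}c_{b_P})$ from the first step: the family $\mathcal{P}_{\mathbf{b}}$ contains one index pair for each $2$-local element of $S_j$ and two index pairs for each $4$-local element, and these pairs are pairwise disjoint because the sets $\Gamma\in S_j$ are pairwise disjoint (by edge condition (i), as $S_j$ is an independent set of $G$) and because a single $4$-local $\Gamma$ contributes two disjoint pairs drawn from its four distinct indices. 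The reordering used in the expansion is legitimate since all factors involved commute. Hence each $\sigma_{\mathbf{b}}$ is Gaussian, so $\rho$ is a mixture of Gaussian states; and since $X\mapsto\tr(HX)$ is linear, some $\sigma_{\mathbf{b}}$ attains $\tr(H\sigma_{\mathbf{b}})\ge\tr(H\rho)$, which is the consequence recorded before the lemma. The one step needing genuine care is the building-block claim --- establishing that $\sigma_{\mathcal{P}}$ is Gaussian (via the quadratic-exponential form, or the Wick-theorem check) --- while everything else is routine algebra and bookkeeping.
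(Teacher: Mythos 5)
Your proof is correct and is essentially the paper's argument in deterministic dress: your two-term splitting $\mathbb{I}-sPQ=\tfrac12(\mathbb{I}+P)(\mathbb{I}-sQ)+\tfrac12(\mathbb{I}-P)(\mathbb{I}+sQ)$ is exactly the paper's uniform average over sign strings $\{z_{gh}\}_{gh\in M_\Gamma}$ satisfying its constraint (\ref{eq:constraints}) for the lexicographic matching of a $4$-local $\Gamma$, and your expansion into $2^{-t}\sum_{\mathbf b}\sigma_{\mathbf b}$ reproduces its computation of $\mathbb{E}_z[\rho'(z)]$. The only cosmetic difference is that the paper also pairs up the unused Majorana modes with uniformly random signs so that each mixture component is a pure Gaussian (stabilizer) state, which is precisely the fix you note parenthetically.
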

\begin{proof}
For each $\Gamma \in S_j$ let $M_{\Gamma}$ be the perfect matching of the operators in $\Gamma$ induced by the lexicographic ordering of $\Gamma$, and let $M$ be a perfect matching of the Majorana operators in $\{c_1, ...c_{2n}\}\setminus  \lbrace c_i \mid \exists \Gamma \in S_j \text{ with } i \in \Gamma\rbrace$ induced by the lexicographic ordering. Define the following Gaussian state:
\begin{equation}\label{eq:2}
    \rho'(z) = \frac{1}{2^n} \prod_{\Gamma \in S_j} \prod_{gh \in M_{\Gamma}} (\mathbb{I}+ z_{gh} \,\, i c_g c_h) \prod_{ rs \in M} (\mathbb{I} + z_{rs}\,\, i c_r c_s),
\end{equation}
where all $z_{gh}, z_{rs} \in \{\pm 1\}$. 

Consider the state $\rho''=\mathbb{E}_z [\rho'(z)]$ where for each $\Gamma$ the set $\{z_{gh}\}_{gh \in M_\Gamma}$ is uniformly random distributed over $\{\pm 1\}^{|M_\Gamma|}$ subject to the constraint:
\begin{equation}\label{eq:constraints}
\text{sign}\left[\left(\prod_{gh \in M_\Gamma} z_{gh}\,\, i c_g c_h \right) c^\Gamma \right]=\text{sign}(H_\Gamma) \quad \forall \Gamma\in S_j,
\end{equation}
where $\text{sign}(\pm \mathbb{I})$ is defined as $\pm 1$.  In other words, $\{z_{gh}\}_{gh \in M_\Gamma}$ is chosen as the uniform distribution over strings in $\{\pm 1\}^{|M_\Gamma|}$ which satisfy \Cref{eq:constraints}.  We will assume further that $\{z_{gh}\}_{gh\in M_\Gamma}$ is independent of all other $\{z_{gh}\}_{gh\in M_{\Gamma'}}$ and that each $z_{rs}$ for $rs\in M$ is uniform and independent of all other random variables. 

We claim that $\rho = \rho''$. Begin by using independence to push the expectation past the first and third products in \Cref{eq:2}:
\begin{equation}
    \rho'' = \frac{1}{2^n}\prod_{\Gamma \in S_j}\Big(\mathbb{E}_{z}\Big[ \prod_{gh \in M_{\Gamma}} (\mathbb{I}+ z_{gh} \,\, ic_gc_h)\Big]\Big) \prod_{ rs \in M}\Big( \mathbb{E}_{z}\Big[(\mathbb{I} + z_{rs}\,\, ic_r c_s)\Big]\Big),
\end{equation}
We first focus on the final product. Observe that:
\begin{equation}
    \prod_{ rs \in M} \Big(\mathbb{E}_{z}\Big[(\mathbb{I} + z_{rs}\,\, ic_r c_s)\Big]\Big) = \mathbb{I}
\end{equation}
This follows from the independence of the $\lbrace z_{rs} \mid rs \in M\rbrace$ and because $\mathbb{E}_{z}[z_{rs}] = 0$ for all $rs \in M$. Hence:
\begin{equation}
    \rho'' = \frac{1}{2^n}\prod_{\Gamma \in S_j}\Big(\mathbb{E}_{z}\Big[ \prod_{gh \in M_{\Gamma}} (\mathbb{I}+ z_{gh} \,\, ic_gc_h)\Big]\Big).
\end{equation}
For fixed $\Gamma \in S_j$, we claim that:
\begin{equation}\label{eq:rhoequiv}
    \mathbb{E}_{z}\Big[ \prod_{gh \in M_{\Gamma}} (\mathbb{I}+ z_{gh} \,\, ic_gc_h)\Big] = \mathbb{I}+\text{sign}(H_{\Gamma})c^{\Gamma}.
\end{equation}
\Cref{lem:gaussianmixture} follows immediately from \Cref{eq:rhoequiv}. For any strict subset $\Gamma' \subsetneq \Gamma$, define 
\begin{equation*}M_{\Gamma'\cap\Gamma} := \{gh \in M_{\Gamma}: g\in \Gamma', h\in \Gamma'\rbrace. 
\end{equation*}
We may then expand the left-hand side of \Cref{eq:rhoequiv} as:
\begin{align}\label{eq:rhoequiv_exp}
    \mathbb{E}_{z}\Big[ \prod_{gh \in M_{\Gamma}} (\mathbb{I}+ z_{gh} \,\, ic_gc_h)\Big] &= \mathbb{I} + \sum_{\Gamma'\subsetneq\Gamma}{\mathbb{E}_{z}\Big[\prod_{gh\in M_{\Gamma'\cap\Gamma}}{z_{gh}\,\, ic_g c_h}\Big]} + \mathbb{E}_{z}\Big[\prod_{gh\in M_{\Gamma}}{z_{gh}\,\, i c_g c_h}\Big] \\
    &= \mathbb{I} + \text{sign}(H_{\Gamma})c^{\Gamma}
\end{align}
The final expectation in \Cref{eq:rhoequiv_exp} evaluates to $\text{sign}(H_{\Gamma})c^{\Gamma}$ due to constraint \ref{eq:constraints}. The sum of expectations in \Cref{eq:rhoequiv_exp} disappears as the marginal distribution of the $z$ when restricted to a matching on a strict subset $\Gamma' \subsetneq \Gamma$ of size $|M_{\Gamma'\cap\Gamma}| = p$ is totally uniform over $\lbrace \pm 1\rbrace^{p}$. Therefore $\mathbb{E}_{z}[z_{gh}] = 0$ for any such matching.
\end{proof}

Although $\rho'(z)$ in \Cref{lem:gaussianmixture} is a Gaussian state for any $z$, the state $\rho''$ is a mixture of Gaussian states by definition.  However, we may  derandomize the choice of $z$ to obtain a Gaussian state.  We only require pairwise independence of the elements of $z$, hence using standard derandomization approaches, we can obtain a Gaussian state $\rho'(z)$ in polynomial time such that $Tr(H\rho'(z)) \geq Tr(H\rho'')$. 

\section{Extensions}\label{sec:extensions}
In this section, we demonstrate how our core approach in the proof of \Cref{thm:main} leads to improved classical approximation algorithms for the ground state energies of various sparse, local Hamiltonians. Each case is dealt with as its own corollary to \Cref{thm:main}.

\begin{corollary}[Strictly $q$-local Hamiltonians.]
Let $H$ be a $k$-sparse, strictly $q$-local fermionic Hamiltonian. There exists a classical polynomial time algorithm that, given $\lbrace H_{\Gamma}\rbrace$ as input, outputs a description of a quantum state $\rho$ achieving energy 
\begin{equation}
    Tr(H\rho) \geq \frac{1}{qk+1}\sum_{\Gamma}{\vert H_{\Gamma}\vert} \geq \frac{\lambda_{max}(H)}{qk+1}.
\end{equation}
\label{cor:strict_q}\end{corollary}
\begin{proof}
In this case we only need to include edges in $G$ between $v_\Gamma$ and $v_{\Gamma'}$ precisely when condition (i) holds, since (ii) is vacuous.  Consequently we may omit the second case below \Cref{eq:deg} and simply bound the degree as $qk$.  We then effectively replace ``4'' with $q$ in the remaining proof.
\end{proof}

\begin{corollary}[Hamiltonians with bounded locality.] Let $H$ be a $k$-sparse, $q$-local fermionic Hamiltonian. There exists a classical polynomial time algorithm that, given $\lbrace H_{\Gamma}\rbrace$ as input, outputs a description of a quantum state $\rho$ achieving energy 
\begin{equation}
    Tr(H\rho) \geq \frac{1}{Cqk^2}\sum_{\Gamma}{\vert H_{\Gamma}\vert} \geq \frac{\lambda_{max}(H)}{Cqk^2},
\end{equation}
for some constant $C \in \mathbb{R}$.
\label{cor:q_local} \end{corollary}
\begin{proof}
In this case we need an appropriate generalization of condition (ii) from \Cref{thm:main}.  Let us start by defining $G$ using only the condition (i); the maximum possible degree in $G$ is $qk$.  The purpose of (ii) in the proof is to ensure that for $\Gamma, \Gamma'$ in the independent set $S_j$, $c^{\Gamma}c^{\Gamma'}$ cannot be proportional to $c^{\Gamma''}$ for any $\Gamma'' \in \mathcal{E}$.  Note that if this happens, then $\Gamma''$ must contain both $\Gamma$ and $\Gamma'$. Thus it would suffice for our independent set $S_j$ in $G$ to satisfy the additional property that no $v_{\Gamma},v_{\Gamma'} \in S_j$ could have a common neighbor $v_{\Gamma''} \in V$ with $\Gamma,\Gamma' \subset \Gamma''$.  We could satisfy this by adding an edge in $G$ between all pairs $v_{\Gamma}$ and $v_{\Gamma'}$ with such a common neighbor.  By $k$-sparsity, the vertex $v_{\Gamma}$ has at most $k$ neighbors $v_{\Gamma''}$ in $G$ with $\Gamma \subset \Gamma''$.  Since any such $v_{\Gamma''}$ has degree at most $qk$, the degree of $v_{\Gamma}$ increases by at most $k(qk-1)$, and maximum degree in the resulting graph $G'$ is $O(qk^2)$.  Applying Brooks' Theorem in $G'$ produces the desired approximation.
\end{proof}

\begin{corollary}[Qubit Hamiltonians] Consider a $k$-sparse, $q$-local qubit Hamiltonian $H$ defined analogously to the fermionic Hamiltonian in \Cref{def:fermionic_hamiltonian}. Given the appropriate assumptions on the locality of $H$, there exists a classical polynomial time algorithm that, given as inputs the weights $\lbrace H_{\Gamma}\rbrace$, outputs a description of a quantum state $\rho$ achieving energy at least:

\begin{table}[h]
    \centering
    \begin{tabular}{|c|c|}
    \hline
    Hamiltonian & Energy \\
    \hline
    strictly $q$-local & $1/(qk+1)$ \\
    $k$-sparse, $2$-local & $1/(2k+1)$ \\
    $k$-sparse, $q$-local & $1/\mathcal{O}(qk^2)$ \\
    \hline
    \end{tabular}
\end{table}

\end{corollary} 
\begin{proof}
For qubit Hamiltonians, condition (i) in \Cref{thm:main} is modified to cover any pair of local terms which involve the same qubit, while condition (ii) is modified to be ``$\Gamma$ and $\Gamma'$ do not involve the same qubit.'' Our results for $k$-sparse and: (i) strictly $q$-local, (ii) $2$-local, and (iii) $q$-local qubit Hamiltonians follow from this modification and considering \Cref{cor:strict_q}, \Cref{thm:main}, and \Cref{cor:q_local} respectively. 
\end{proof}

\section{Optimality of our strictly q-local result}
\label{sec:optimality}

For $k$-sparse $H$ where all terms are $q$-local, since $\|H\| \geq \lambda_{\max}(H)$, our results show that 
\begin{equation*}
\|H\| \geq \lambda_{\max}(H) \geq \frac{m}{qk+1},
\end{equation*}
where we recall $m = \sum_{\Gamma}|H_{\Gamma}|$ and $\|\cdot\|$ denotes the operator norm.  We give an explicit family of fermionic 2-local $n$-sparse Hamiltonians $\{H_n\}_{n=1}^\infty$ demonstrating this bound is asymptotically tight (i.e., cannot be improved for all $q$ and $k$, up to constant factors).

Each $H_n$ is expressed as a sum of monomials in $2n$ Majorana operators $\{c_1, c_2, ..., c_{2n}\}$ satisfying the usual canonical anti-commutation relations.  For each $n$, partition $[2n]$ evenly into $A=\{1, ..., n\}$ and $B=\{n+1, ..., 2n\}$. Then:
\begin{equation*} 
H_n :=  \sum_{a\in A, b\in B} ic_a c_b= i \left(\sum_{a \in A} c_a\right) \left(\sum_{b\in B} c_b \right).
\end{equation*}

The eigenvalues of $H_n$ are easy to determine, define $R\in O(2n)$ as some orthogonal matrix satisfying:
\begin{equation*}
    R_{a, 1}=1/\sqrt{n}  \,\,\,\,\, \forall a \in A \text{      and     }  R_{b, 2}=1/\sqrt{n}\,\,\,\,\, \forall b \in B.
\end{equation*}
Note that this is well defined since the first two columns are orthonormal.  We can then define a new set of Majorana operators (also satisfying the canonical anti-commutation relations) by:
\begin{equation*}
\tilde{c}_i=\sum_{i=1}^{2n} R_{j, i} c_j.
\end{equation*}
In particular, we have
\begin{equation*}
\tilde{c_1}=\frac{1}{\sqrt{n}}\sum_{a \in A} c_a \text{     and    } \tilde{c_2}=\frac{1}{\sqrt{n}}\sum_{b\in B} c_b,
\end{equation*}
so
\begin{equation*}
    H=    n i \tilde{c_1}\tilde{c_2}.
\end{equation*}
Since $i\tilde{c_1} \tilde{c_2}$ is Hermitian and satisfies $(i\tilde{c_1} \tilde{c_2})^2=\mathbb{I}$, it has eigenvalues in $\{\pm 1\}$. Thus the eigenvalues of $H_n$ are $\{\pm n\}$.  Note that $H_n$ is $n$-sparse, $m=n^2$, and $\|H_n\| = \lambda_{\max}(H_n)$ so that 
\begin{equation*}
\|H_n\| = \lambda_{\max}(H_n) = n = \Theta\left(\frac{n^2}{2n+1}\right) = \Theta\left(\frac{m}{qk+1}\right).
\end{equation*}

%%
%% Bibliography
%%

%% Please use bibtex, 

\bibliography{approximating-extremal-eigenvalues-lipics}
\end{document}